\documentclass[twoside,leqno,twocolumn]{article}

\usepackage[letterpaper]{geometry}
\usepackage{siamproceedings}

\usepackage[T1]{fontenc}
\usepackage{amsfonts}
\usepackage{graphicx}
\usepackage{epstopdf}
\usepackage{enumitem}
\usepackage{algorithmic}
\ifpdf
 \DeclareGraphicsExtensions{.eps,.pdf,.png,.jpg}
\else
 \DeclareGraphicsExtensions{.eps}
\fi

\newsiamremark{remark}{Remark}
\newsiamthm{claim}{Claim}

\usepackage{amsopn}

\usepackage{listings}
\lstdefinestyle{kernel}{
 basicstyle=\ttfamily\scriptsize,
 numbers=none,
 keepspaces=true,
 columns=fullflexible,
}

\usepackage{float}
\floatstyle{plain}
\newfloat{listing}{tbp}{lol}
\floatname{listing}{Listing}

\usepackage{standalone}
\usepackage{tikz}
\usetikzlibrary{positioning,arrows.meta,fit,backgrounds,calc,matrix}

\newcommand{\sz}[1]{\$}
\usepackage{amsmath,amssymb}
\usepackage{xcolor}
\usepackage{booktabs}

\usepackage{cleveref}
\crefname{listing}{Listing}{Listings}
\Crefname{listing}{Listing}{Listings}
\crefname{equation}{eq.}{eqs.}
\Crefname{equation}{Eq.}{Eqs.}

\begin{document}

\newcommand\relatedversion{}

\title{\large Right Multiplication on Grammar-Compressed Matrices:\\A Streaming, Memory-Bounded GPU Engine\relatedversion}

\author{Francesco Tosoni\thanks{Sant'Anna School of Advanced Studies, L'EMbeDS,
 p.zza Martiri della Libert\`a 33, 56127 Pisa PI, Italy
 (\email{Francesco.Tosoni@santannapisa.it}).}
 \and Gabriele Mencagli\thanks{Department of Computer Science, University of Pisa,
 L.go B. Pontecorvo 3, 56127 Pisa PI, Italy
 (\email{gabriele.mencagli@unipi.it}).}}

\date{}

\maketitle

\fancyfoot[R]{\scriptsize{}}

\begin{abstract}
Grammar-compressed matrices (the \texttt{mm-repair} family~\cite{mmrepair}) store a matrix's non-zero structure as a RePair straight-line program (SLP)~\cite{repair}, supporting matrix--vector products in time and space proportional to the {\em compressed} size. We target the regime where this is decisive on a GPU: when the uncompressed matrix exceeds device memory, so footprint (not floating-point throughput) is the binding constraint. Our SLP is a directed acyclic graph (DAG) of out-degree~2, and the right product $y=Mx$ is a single bottom-up sweep (leaves~$\to$~roots): a conflict-free {\em gather}. The dual left product is a {\em scatter} with write contention, which we deliberately avoid. We make the grammar {\em properly layered} (every nonterminal child one level below its parent) via {\em pass-through completion}, which inserts identity nodes to carry values upward until consumed. This yields a {\em streaming} evaluation in which each level reads only the level below and writes the next, so the live set fits in two alternating read-only/write-only buffers instead of scaling with the whole grammar; the per-level width equals the live set. On genotype matrices, where a polygenic score is exactly the right product $y=G\beta$~\cite{choi2020tutorial}, a CUDA implementation shows a clear {\em space} advantage: a device footprint 4 to 8 times smaller than a materialized cuSPARSE \cite{nvidia2026cusparse} CSR baseline, single-vector times within a small factor of cuSPARSE, and consistently lower energy. Because the sweep needs only an associative combine, the same engine and schedule evaluate any monoid homomorphism over the grammar by swapping a small leaf/combine/emit policy; the same reachability sweep then scales to the billion-edge Software Heritage graph ($261$\,TB dense and unmaterializable, $21\times$ smaller serialized than CSR), where the memory argument holds. We frame this as an algorithm-engineering case study: structural metrics (depth, live-set width, completion cost) are measured, architecture-independent grammar properties, whereas time and energy are profiled on a single board.
\end{abstract}

\section{Introduction}
\label{sec:intro}

Following~\cite{cla_vldbj,cla_cacm}, research into computation-friendly matrix formats~\cite{impossibility_gra_linalg,accelerating_gnn,arroyuelo2026boosting,rpq-k2tree-vldbj,aware,unified_lossless_sparse,mmrepair,francisco2022friendly,efficient_marino,greener} enables linear algebra directly on compressed representations. Similar principles drive compressed-text engines on GPUs and FPGAs~\cite{gtadoc,ftadoc} and indexing repetitive collections~\cite{navarro-rep-i,navarro-rep-ii}, where space scales with repetitiveness rather than raw size, as for the $r$ measure of the r-index~\cite{rindex_soda,rindex_jacm} or run-length-BWT machinery~\cite{boucher-uk-biobank} adapted for genotype data.

The \texttt{mm-repair} scheme~\cite{mmrepair} encodes the structure of a matrix $M\in\mathbb{R}^{n\times m}$ as a string, grammar-compressed via RePair~\cite{repair} into $(\mathcal{C},\mathcal{R},V)$ (\Cref{eq:M,eq:rules,eq:C}). This supports right products $y=Mx$ and left products $x^{\mathsf{T}}=y^{\mathsf{T}}M$ in $O(\vert{}\mathcal{C}\vert{}+\vert{}\mathcal{R}\vert{})$ time and $O(\vert{}\mathcal{R}\vert{})$ space~\cite{greener}. Prior works~\cite{mmrepair,greener} parallelize this by partitioning $M$ into row blocks, compressing them separately, and assigning each to a thread.

We propose an orthogonal, GPU-specific parallelization \emph{within} a single grammar. We focus on \emph{memory-constrained engineering}: maximizing \emph{in-place} computation near physical memory limits. Since the grammar is a DAG of out-degree~2 (\cref{sec:background,sec:dag}), the right product is a bottom-up, conflict-free \emph{gather}. The dual left product is a \emph{scatter} with write contention on highly shared nodes: an obstacle on GPUs.

This up- and down-sweep pattern recalls Blelloch’s parallel prefix sum~\cite{BlellochTR90}: our right-product gather mirrors his reduction, while his distribution mirrors the left-product scatter. Yet, though balanced trees allow local rewrites, our DAG’s high-in-degree nodes force accumulation under contention. \emph{We thus restrict this paper to the right product}: it is essential for power iteration (e.g., PageRank)~\cite{mmrepair,greener,francisco2022friendly} and avoids scatter bottlenecks.

\paragraph{Streaming on a properly layered grammar.}

The sweep is embarrassingly parallel \emph{within each level} (\cref{lem:antichain}), but the target regime needs more than node independence: a \emph{properly layered} grammar, where every nonterminal child sits one level below its parents (\cref{def:proper}). Then each level reads only the level below and writes the next, so traversing bottom-up every frontier is \emph{freed} once consumed and two buffers alternate read- and write-only (\cref{lem:liveness}). We retain RePair's compression capabilities and impose proper layering via \emph{pass-through completion} (\cref{sec:through}), inserting identity nodes that propagate values upward until they are consumed; the resulting per-level width precisely matches the live set (i.e., the number of active nonterminals).

\paragraph{Scope and non-goals.}

We pursue a ``compress-once, multiply-many-times'' scenario in which the input vectors are not known \emph{a priori}. The GPU serves as our sole target architecture; a host-constructed grammar feeds a level-synchronous GPU engine, which subsequently collects all throughput metrics. We evaluate performance against the 16-thread CPU \texttt{mm-repair} baseline from~\cite{mmrepair} and the NVIDIA cuSPARSE CSR kernel \cite{nvidia2026cusparse} as references. The depth, live-set width, and structural sizing metrics are architecture-independent grammar properties computed on the host during construction; these \emph{a priori} grammar features enable prediction of GPU execution costs. We exclude left multiplication and FPGA-based deployments, deferring them to future work (\cref{sec:conclusion}). We delineate in \cref{sec:setup,sec:limitations} board-dependent versus invariant metrics, and validate on two domains stressing different axes: genotype matrices, whose dense form is prohibitive at biobank scale so $y=G\beta$ wins on space at time parity, and repetitive graphs, where the same monoid sweep reaches a billion-edge software graph.

\section{Background: grammar-compressed matrices}
\label{sec:background}

\paragraph{CSRV.}

Following the methodology in~\cite{mmrepair}, let $V[1,k]$ store the distinct non-zero values of $M$. Scanning $M$ row by row, each non-zero entry $M[i][j]=V[\ell]$ is emitted as a pair $\langle\ell,j\rangle$. A delimiter \$ terminates each row, thereby producing a string $S$ composed of value-column pairs and \$. \Cref{eq:M,eq:S} give the small matrix used as our running example (an $8\times7$ matrix with $|V|=4$ distinct non-zero values and three distinct rows, repeated to expose shared structure) and its CSRV string $S$; the value array is $V=[\,1.2,\ 2.3,\ 3.4,\ 4.5\,]$. In \cref{eq:S} we write out only the three distinct rows ($1,2,6$); the $\cdots$ mark where the repeated row blocks recur: rows $3,5$ repeat row~$2$, rows $4,8$ repeat row~$1$, and row~$7$ repeats row~$6$. The pair $\langle\ell,j\rangle$ encodes value $V[\ell]$ in column $j$; for example, $\langle2,1\rangle$ is $V[2]=2.3$ in column~$1$, while the same value in column~$6$ is $\langle2,6\rangle$. Only equal values in the \emph{same} column share a pair. A row delimiter \$ closes each row.

{\setlength{\arraycolsep}{3pt}
\begin{equation}\label{eq:M}
M=\begin{pmatrix}
2.3 & 0 & 1.2 & 1.2 & 1.2 & 2.3 & 1.2\\
1.2 & 4.5 & 3.4 & 1.2 & 1.2 & 2.3 & 4.5\\
1.2 & 4.5 & 3.4 & 1.2 & 1.2 & 2.3 & 4.5\\
2.3 & 0 & 1.2 & 1.2 & 1.2 & 2.3 & 1.2\\
1.2 & 4.5 & 3.4 & 1.2 & 1.2 & 2.3 & 4.5\\
0 & 4.5 & 3.4 & 2.3 & 2.3 & 0 & 4.5\\
0 & 4.5 & 3.4 & 2.3 & 2.3 & 0 & 4.5\\
2.3 & 0 & 1.2 & 1.2 & 1.2 & 2.3 & 1.2
\end{pmatrix}
\end{equation}}
\begin{equation}\label{eq:S}
\small
\begin{aligned}
S={} & \langle2,1\rangle\langle1,3\rangle\langle1,4\rangle\langle1,5\rangle\langle2,6\rangle\langle1,7\rangle\,\$ \\
     & \langle1,1\rangle\langle4,2\rangle\langle3,3\rangle\langle1,4\rangle\langle1,5\rangle\langle2,6\rangle\langle4,7\rangle\,\$\ \cdots \\
     & \langle4,2\rangle\langle3,3\rangle\langle2,4\rangle\langle2,5\rangle\langle4,7\rangle\,\$\ \cdots
\end{aligned}
\end{equation}

\paragraph{Grammar.} 

RePair~\cite{repair} compresses $S$ by repeatedly replacing the most frequent pair of adjacent symbols with a new nonterminal, modified to ensure it never pairs elements across a \$ boundary~\cite{mmrepair}. This process yields a set of production rules $N_i\to A_iB_i$ (where each $A_i,B_i$ is either a terminal $\langle\ell,j\rangle$ or a preceding nonterminal $N_{<i}$) and a final sequence $\mathcal{C}$ whose constituent symbols expand directly to the matrix rows. The indexing follows a strict topological ordering: $i<j$ whenever $N_i$ appears on the right-hand side of the rule for $N_j$. Executing RePair on the string $S$ of \cref{eq:S} yields:
\begin{equation}
\label{eq:rules}
\small
\mathcal{R}=\left\{
\begin{array}{ll}
N_1\!\to\!\langle1,4\rangle\langle1,5\rangle &
N_2\!\to\!N_1\langle2,6\rangle \\[2pt]
N_3\!\to\!\langle4,2\rangle\langle3,3\rangle &
N_4\!\to\!\langle1,1\rangle N_3\\[2pt]
N_5\!\to\!\langle1,3\rangle N_2 &
N_6\!\to\!\langle2,1\rangle N_5 \\[2pt]
N_7\!\to\!N_2\langle4,7\rangle &
N_8\!\to\!N_4 N_7\\[2pt]
N_9\!\to\!N_6\langle1,7\rangle &
N_{10}\!\to\!\langle2,4\rangle\langle2,5\rangle \\[2pt]
N_{11}\!\to\!N_3 N_{10} &
N_{12}\!\to\!N_{11}\langle4,7\rangle
\end{array}\right\}
\end{equation}
\begin{equation}
\label{eq:C}
\mathcal{C}=N_9\,\$\,N_8\,\$\,N_8\,\$\,N_9\,\$\,N_8\,\$\,N_{12}\,\$\,N_{12}\,\$\,N_9\,\$ .
\end{equation}
In this specific instance, each row within $\mathcal{C}$ reduces to a single nonterminal symbol; thus, $\mathcal{C}$ tracks the eight distinct row roots drawn from the set $\{N_9,N_8,N_{12}\}$. In practice, $\mathcal{C}$ can be longer and may incorporate bare terminals. Indeed, RePair stops when no more pairs of consecutive symbols appear more than once~\cite{mmrepair}. The deepest root, $N_9$, encapsulates a chain of five nested rules ($N_9\!\to\!N_6\!\to\!N_5\!\to\!N_2\!\to\!N_1$), indicating that the grammar spans five logical levels.

\paragraph{The right-product sweep.} Let $\mathrm{eval}_x(\langle\ell,j\rangle)=V[\ell]\cdot x[j]$ denote the scalar contribution of an individual matrix entry to the output product. The underlying grammar transforms the multiplication $y=Mx$ into a single bottom-up evaluation pass governed by three key properties established in~\cite{mmrepair}.

\begin{lemma}[Additivity of $\mathrm{eval}_x$~\cite{mmrepair}]
\label{lem:eval}
For any grammar rule $N_i\to A_iB_i$, the evaluation distributes additively: $\mathrm{eval}_x(N_i)=\mathrm{eval}_x(A_i)+\mathrm{eval}_x(B_i)$.
\end{lemma}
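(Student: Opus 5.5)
The plan is to make precise what $\mathrm{eval}_x$ means on a nonterminal, and then reduce the lemma to linearity of a sum over a concatenation. The paper defines $\mathrm{eval}_x$ only on terminals, so I would first extend it to every grammar symbol $X$ via its expansion $\mathrm{exp}(X)$, the string of terminals $X$ derives. For a terminal, $\mathrm{exp}(\langle\ell,j\rangle)=\langle\ell,j\rangle$. For a rule $N_i\to A_iB_i$, $\mathrm{exp}(N_i)=\mathrm{exp}(A_i)\,\mathrm{exp}(B_i)$. This is well defined by induction on the index $i$, using the topological ordering of $\mathcal{R}$: both children are terminals or nonterminals $N_{<i}$. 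Then set
\begin{equation*}
\mathrm{eval}_x(X)=\sum_{\langle\ell,j\rangle\in \mathrm{exp}(X)} V[\ell]\,x[j],
\end{equation*}
where the sum runs over occurrences, with multiplicity. This is the intended semantics: if $X$ spans one whole row $i$, then $\mathrm{eval}_x(X)=\sum_j M[i][j]\,x[j]=y[i]$, because each non-zero $M[i][j]=V[\ell]$ is emitted exactly once as $\langle\ell,j\rangle$.

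With this definition, the lemma follows in one step. The expansion of $N_i$ is the concatenation of $\mathrm{exp}(A_i)$ and $\mathrm{exp}(B_i)$, so the multiset of terminal occurrences in $\mathrm{exp}(N_i)$ is the disjoint union of those of the two children. Since a finite sum splits over a disjoint union of index sets,
\begin{equation*}
\mathrm{eval}_x(N_i)=\sum_{t\in\mathrm{exp}(A_i)}\mathrm{eval}_x(t)+\sum_{t\in\mathrm{exp}(B_i)}\mathrm{eval}_x(t)=\mathrm{eval}_x(A_i)+\mathrm{eval}_x(B_i).
\end{equation*}
No induction is needed for the identity itself, only for the well-definedness of $\mathrm{exp}$.

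The main obstacle is not algebraic but definitional. The modified RePair never pairs across a \$ boundary, so $\mathrm{exp}(N_i)$ never contains a delimiter. I need this so that $\mathrm{eval}_x$ never meets the undefined symbol \$, and so that the value of a nonterminal is a partial dot product of a single row rather than a mixture of rows. I would state this invariant explicitly and prove it by the same induction on $i$. Given it, the result is simply associativity and commutativity of real addition. This is the observation the paper later generalizes: the only property used is that $(\mathbb{R},+)$ is a monoid, and commutativity is not even needed, since the concatenation order is preserved.
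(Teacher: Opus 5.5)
Your proof is correct. The paper never defines $\mathrm{eval}_x$ on nonterminals and gives no argument for this lemma: it imports the lemma from \cite{mmrepair} and uses it only to justify the recurrence $W[i]=\mathrm{cv}(A_i)+\mathrm{cv}(B_i)$, so there is no in-paper proof to follow or diverge from. Your route makes the semantics explicit. Once $\mathrm{eval}_x(X)$ is defined as the sum over $\mathrm{exp}(X)$, additivity is a one-line split of that sum at the boundary of the concatenation $\mathrm{exp}(A_i)\,\mathrm{exp}(B_i)$. The substantive content then shifts to \cref{lem:rows}, which under your definition reduces to the fact that a root's expansion is exactly the CSRV encoding of its row.

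Two small remarks. First, as written you sum over a multiset and split over a disjoint union of index sets, which silently uses commutativity. Your closing claim, and the paper's remark in \cref{sec:semiring}, is that associativity alone suffices. To support that claim, define $\mathrm{eval}_x(X)$ as the left-to-right fold along $\mathrm{exp}(X)=t_1\cdots t_m$; cutting it between $\mathrm{exp}(A_i)$ and $\mathrm{exp}(B_i)$ then uses only associativity.

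Second, the delimiter invariant is not really an obstacle here. In the paper's rule format each $A_i,B_i$ is a pair $\langle\ell,j\rangle$ or an earlier nonterminal, so \$ cannot occur in any expansion and the induction is immediate. The single-row property is what \cref{lem:rows} needs, not additivity.
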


\begin{lemma}[Rows as roots~\cite{mmrepair}]
\label{lem:rows}
Given a compressed sequence $\mathcal{C}=N_{i_1}\$\cdots N_{i_n}\$$ and a product $y=Mx$, the output vector matches the root evaluations: $y[r]=\mathrm{eval}_x(N_{i_r})$ for every row index $r=1,\dots,n$.
\end{lemma}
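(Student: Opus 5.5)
The plan is to extend $\mathrm{eval}_x$ from single symbols to strings and show that it commutes with the expansion map. Let $\mathrm{exp}(\cdot)$ send a nonterminal to the terminal string it derives, with $\mathrm{exp}(\langle\ell,j\rangle)=\langle\ell,j\rangle$. For a string $w=w_1\cdots w_t$ of terminals and nonterminals (no \$), define $\mathrm{eval}_x(w)=\sum_s \mathrm{eval}_x(w_s)$. The first step is to prove, by strong induction on the rule index $i$ along the topological order of $\mathcal{R}$, that
\[
\mathrm{eval}_x(N_i)=\sum_{\langle\ell,j\rangle\in \mathrm{exp}(N_i)} V[\ell]\,x[j],
\]
where the sum runs over occurrences. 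For the inductive step, apply \cref{lem:eval}: $\mathrm{eval}_x(N_i)=\mathrm{eval}_x(A_i)+\mathrm{eval}_x(B_i)$. Each of $A_i,B_i$ is either a terminal, where the claim holds by definition, or some $N_{i'}$ with $i'<i$, where the induction hypothesis applies. Since $\mathrm{exp}(N_i)=\mathrm{exp}(A_i)\,\mathrm{exp}(B_i)$ is a concatenation, the two sums add to the sum over $\mathrm{exp}(N_i)$.

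The second step links the grammar to the matrix. By construction of RePair, replacing every nonterminal in $\mathcal{C}$ by its expansion reproduces $S$ exactly. The modification that forbids pairs straddling a \$ guarantees that no rule's expansion contains \$. Therefore the $r$-th \$-delimited block of $\mathcal{C}$, here the single symbol $N_{i_r}$, expands to precisely the $r$-th \$-delimited block of $S$. By the CSRV construction, that block is the list of pairs $\langle\ell,j\rangle$ with $M[r][j]=V[\ell]\neq 0$, each column $j$ appearing once.

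The third step combines the two: $\mathrm{eval}_x(N_{i_r})=\sum_{j:\,M[r][j]\ne 0} M[r][j]\,x[j]=\sum_j M[r][j]\,x[j]=y[r]$, because zero entries contribute nothing. If a block of $\mathcal{C}$ is a longer string of symbols, including bare terminals, the same argument goes through with the string extension of $\mathrm{eval}_x$. So the statement holds in the general case as well as the single-root case written in the lemma.

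The main obstacle is step two: justifying that row blocks of $\mathcal{C}$ align with row blocks of $S$. This rests on two facts. First, each RePair replacement preserves the expansion of the whole sequence, so an induction over replacement steps shows the invariant $\mathrm{exp}(\mathcal{C})=S$. Second, \$ is never absorbed into a rule, so the \$ symbols survive in $\mathcal{C}$ in the same order and partition it compatibly with $S$. Once this is in place, the rest is bookkeeping on finite sums.
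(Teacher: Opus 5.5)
Your proof is correct and follows essentially the same argument. The paper gives no proof of its own here; it imports the lemma from \texttt{mm-repair}, and your two steps are the standard justification, the one visible in the paper's description of the $W$ array after \cref{thm:right}. First, each rule's value is the sum of $V[\ell]\,x[j]$ over its expansion, by induction along the topological order via \cref{lem:eval}. Second, \$ is never absorbed into a rule, so the $r$-th block of $\mathcal{C}$ expands to the CSRV encoding of row $r$ of $M$, whose sum is $y[r]$. Your string extension of $\mathrm{eval}_x$ also correctly covers blocks with several symbols or bare terminals, and even all-zero rows (empty block, empty sum $0$), which the lemma's stated form of $\mathcal{C}$ tacitly excludes.
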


\begin{theorem}[Compressed-time right product~\cite{mmrepair}]
\label{thm:right}
Given the grammar components $(\mathcal{C},\mathcal{R},V)$ representing a matrix $M\in\mathbb{R}^{n\times m}$ and an input vector $x\in\mathbb{R}^m$, the product $y=Mx$ can be evaluated in $O(|\mathcal{C}|+|\mathcal{R}|)$ time using $O(|\mathcal{R}|)$ words of auxiliary storage.
\end{theorem}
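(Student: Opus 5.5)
The plan is to evaluate every nonterminal once, in the topological order of $\mathcal{R}$, and then read off the row sums from $\mathcal{C}$. We allocate an array $W[1..|\mathcal{R}|]$ of words, one per rule. For $i=1,\dots,|\mathcal{R}|$ in increasing order we set $W[i]\gets \mathrm{val}(A_i)+\mathrm{val}(B_i)$. Here $\mathrm{val}(\langle\ell,j\rangle)=V[\ell]\cdot x[j]$ for a terminal, and $\mathrm{val}(N_k)=W[k]$ for a nonterminal.

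Correctness comes in three steps.
\begin{enumerate}
\item The strict topological indexing of $\mathcal{R}$ ($k<i$ whenever $N_k$ occurs on the right-hand side of $N_i$) guarantees that $W[k]$ is already filled when it is read.
\item Induction on $i$, with \Cref{lem:eval} as the inductive step and terminals as the base, gives $W[i]=\mathrm{eval}_x(N_i)$ for every $i$.
\item We scan $\mathcal{C}$ left to right with a running accumulator. We add $\mathrm{val}(s)$ for each symbol $s$ of the current row. At each \$ we write the accumulator to $y[r]$, reset it, and increment $r$. By \Cref{lem:rows}, and by additivity over concatenation for rows whose $\mathcal{C}$-segment contains several symbols or bare terminals, we get $y[r]=\mathrm{eval}_x(\text{row } r)=(Mx)[r]$.
\end{enumerate}
To justify the multi-symbol case in step 3, we note that $\mathrm{eval}_x$ extended to strings is a monoid homomorphism from strings to $(\mathbb{R},+)$. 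The row string is therefore the sum over its symbols, since RePair never pairs across \$.

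For the cost, each rule does $O(1)$ work: two lookups, at most two multiplications, and one addition. The first pass therefore takes $O(|\mathcal{R}|)$ time, and the scan of $\mathcal{C}$ takes $O(|\mathcal{C}|)$ time. The auxiliary storage is $W$ plus $O(1)$ scalars, i.e.\ $O(|\mathcal{R}|)$ words. The input $x$, the value array $V$, the output $y$ and the grammar itself are not counted as auxiliary storage, which is the convention in~\cite{mmrepair}.

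The main obstacle is bookkeeping rather than mathematics. We must make sure rows consisting of more than one symbol (or of bare terminals) are handled, and that no hidden dependence on $n$ or $m$ enters the auxiliary space. The streaming accumulator handles the first point. We avoid the second by writing directly into $y$ instead of buffering anything per row.
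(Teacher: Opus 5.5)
Your proposal is correct and follows the same route as the paper's justification: a single forward pass over $\mathcal{R}$ in topological order fills $W[i]=\mathrm{cv}(A_i)+\mathrm{cv}(B_i)$ via \cref{lem:eval}, and the row values are then read off $\mathcal{C}$ as in \cref{lem:rows}. Your per-row accumulator over each \$-delimited segment of $\mathcal{C}$ is a worthwhile refinement, since it also covers rows whose segment contains several symbols or bare terminals, a case the paper's extraction $y[r]=W[\mathrm{root}_r]$ does not address.
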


In practice, the CPU implementation of \cite{mmrepair} populates an auxiliary array $W[1,|\mathcal{R}|]$ such that $W[i]=\mathrm{eval}_x(N_i)$ via a single forward traversal. Because the topological numbering ensures child nodes precede their respective parents, \cref{lem:eval} resolves sequentially as $W[i]=\mathrm{cv}(A_i)+\mathrm{cv}(B_i)$, where $\mathrm{cv}(N_j)=W[j]$ and $\mathrm{cv}(\langle\ell,j\rangle)=V[\ell]x[j]$. The final answers are then extracted via $y[r]=W[\mathrm{root}_r]$, as in \cref{lem:rows}. Preserving the explicit \$ boundaries enforces a clean, row-separated representation that underpins the right-product execution; all our structural GPU-specific modifications respect this property. \Cref{thm:right} provides the operational bound maintained by our engine, and the streaming framework covered in \cref{sec:gpu} represents its level-synchronous, parallel GPU realization.

\section{The grammar as a DAG, and its levels}
\label{sec:dag}

We interpret the rules in \cref{eq:rules} as a directed acyclic graph $G$: each nonterminal maps to an internal node, each distinct terminal forms a leaf node, and directed edges span from every parent $N_i$ to its two children. Consequently, nonterminals maintain an out-degree of 2, terminals remain childless leaves, and the set of roots corresponds to the symbols active in $\mathcal{C}$. Crucially, $G$ forms a \emph{DAG rather than a tree}, as reused sub-expressions exhibit an in-degree $>1$. For instance, $N_2$ drives both $N_5$ and $N_7$, $N_3$ feeds both $N_4$ and $N_{11}$, and individual roots like $N_9,N_8,N_{12}$ serve multiple rows within $\mathcal{C}$. From the perspective of the right product, this multi-parent sharing is benign because it represents pure read-sharing.

\paragraph{Levels.} We assign every node an integer level corresponding to its maximal path distance from a leaf:
\begin{equation}
\label{eq:level}
\begin{aligned}
\mathrm{lvl}(\text{terminal})&=0,\\
\mathrm{lvl}(N_i\!\to\!A_iB_i)&=1+\max\bigl(\mathrm{lvl}(A_i),\mathrm{lvl}(B_i)\bigr).
\end{aligned}
\end{equation}

Applying this definition to the rules in \cref{eq:rules} generates five distinct strata: $\{N_1,N_3,N_{10}\}$, $\{N_2,N_4,N_{11}\}$, $\{N_5,N_7,N_{12}\}$, $\{N_6,N_8\}$, and $\{N_9\}$, as illustrated in \cref{fig:dag}. The left spine $N_9\!\to\!N_6\!\to\!N_5\!\to\!N_2\!\to\!N_1$ realizes the five levels. Nodes are \emph{shared} ($N_2$ feeds $N_5$ and $N_7$; $N_3$ feeds $N_4$ and $N_{11}$; the terminal $\langle4,7\rangle$ feeds $N_7$ and $N_{12}$). Many edges skip a level by reaching a \emph{terminal} (e.g., $N_9,N_6,N_5$ each drop to a level-0 leaf): these are harmless, as terminals read the persistent input and are never freed (\cref{def:proper}). The critical edges are the \emph{nonterminal} level-skipping edges, such as \textcolor{red!70!black}{$N_8\!\to\!N_4$} (red) from level~4 to level~2. This skips a level and violates proper layering, since evaluating $N_8$ would require a value two frontiers below it. \emph{Pass-through completion} (\cref{sec:through-layer}, \cref{fig:dag}) splits exactly such edges so the streaming sweep of \cref{lem:liveness} applies.

\begin{lemma}[Levels are antichains]
\label{lem:antichain}
If $\mathrm{lvl}(u)=\mathrm{lvl}(v) $ for distinct nodes $u\neq v$, then $G$ contains no directed edge connecting $u$ and $v$.
\end{lemma}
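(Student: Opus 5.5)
We argue by contradiction from the recursive definition of levels in \cref{eq:level}. The key step is a one-line monotonicity fact: along every directed edge of $G$ the level strictly decreases. Every edge of $G$ runs from a nonterminal parent $N_i\to A_iB_i$ to one of its two children. By \cref{eq:level},
\[
\mathrm{lvl}(N_i)=1+\max\bigl(\mathrm{lvl}(A_i),\mathrm{lvl}(B_i)\bigr)\ \ge\ 1+\mathrm{lvl}(C)
\]
for each child $C\in\{A_i,B_i\}$. Hence $\mathrm{lvl}(\text{parent})>\mathrm{lvl}(\text{child})$ on every edge. Terminals have no outgoing edges, so every edge is covered by this case.

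With this in hand, suppose $\mathrm{lvl}(u)=\mathrm{lvl}(v)$ with $u\neq v$, and suppose $G$ has an edge between them. By symmetry we may take it to be $u\to v$, so $u$ is a nonterminal and $v$ is one of its children. Monotonicity then gives $\mathrm{lvl}(u)\ge \mathrm{lvl}(v)+1>\mathrm{lvl}(v)$, which contradicts the equal-level assumption.

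Before that, we check that $\mathrm{lvl}$ is well defined. Because $G$ is a DAG, and the topological numbering of \cref{sec:background} places every child before its parent, the recursion in \cref{eq:level} terminates. It assigns a unique finite integer to every node, by induction on the rule index $i$.

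There is no real obstacle here. The only point needing care is that the lemma concerns direct edges; I would also remark that the same argument rules out any directed path, since levels strictly decrease along paths as well. This stronger fact is what justifies evaluating all nodes of one level in parallel.
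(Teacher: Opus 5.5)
Your proof is correct and matches the paper's argument: by \cref{eq:level}, every edge goes to a child of strictly smaller level, so no edge can join two nodes of equal level. Your extra checks, well-definedness of $\mathrm{lvl}$ and the strengthening to directed paths, are sound and are left implicit in the paper.
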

\begin{proof}
Every valid edge in $G$ terminates at a child node possessing a strictly smaller level according to \cref{eq:level}; hence, the source and destination endpoints of an edge can never occupy the same level.
\end{proof}

\Cref{lem:antichain} serves as the core parallelization driver for our architecture: \emph{all nodes residing within a given level can be evaluated simultaneously}. The total number of levels, meaning the overall grammar depth $L=\max_i\mathrm{lvl}(N_i)$, defines the critical path of the computation and determines the number of sequential GPU kernel launches. Yet, node independence alone does not fully satisfy our design objectives; we require a more rigid structural configuration.

\begin{definition}[Proper layering]
\label{def:proper}
A grammar is \emph{properly layered} if every \emph{nonterminal} child node is positioned exactly one level beneath its parent node: for any rule $N\to AB$ where $\mathrm{lvl}(N)=k$, any child that is a nonterminal must satisfy $\mathrm{lvl}=k-1$. Terminal children face no such constraints; they interface directly with the persistent inputs $x$ and $V$, which remain globally resident in device memory and are never deallocated.
\end{definition}

\begin{lemma}[Liveness / double buffering]
\label{lem:liveness}
When executing over a properly layered grammar, the right-product sweep requires only two active value buffers in working memory. While evaluating level $k$, the kernel reads exclusively from level $k-1$ and writes to level $k$. Upon completing level $k$, the frontier at level $k-1$ has no remaining consumers and can be safely deallocated; any root node finalized at level $k$ is immediately emitted to the output vector $y$. Consequently, the transient memory footprint is bounded by $O(\max_k w_k)$, where $w_k$ is the width of level $k$, rather than scaling as $O(|\mathcal{R}|)$, and the execution alternates between two dedicated read-only and write-only buffers.
\end{lemma}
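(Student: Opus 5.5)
The argument is an induction over the level-synchronous sweep, using the additivity of \cref{lem:eval} together with the structural guarantee of \cref{def:proper}. I would first fix the schedule. Let $F_k$ be the set of nonterminals with $\mathrm{lvl}=k$, so $w_k=|F_k|$, and index each $F_k$ contiguously as $0,\dots,w_k-1$. Keep two buffers $B_0,B_1$, each of capacity $\max_k w_k$. Level $k$ is written into $B_{k\bmod 2}$, and only $B_{(k-1)\bmod 2}$ is read. The claim to prove is an invariant: after level $k$ completes, $B_{k\bmod 2}[t]=\mathrm{eval}_x(u_t)$ for every $u_t\in F_k$, and every root in $\bigcup_{j\le k}F_j$ has been written to $y$.

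The base case is level $1$. Its nodes have only terminal children, since any nonterminal has level $\ge 1$. Each such node is therefore computed from $V$ and $x$ alone, using $\mathrm{cv}(\langle\ell,j\rangle)=V[\ell]x[j]$ and \cref{lem:eval}.

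For the inductive step, take $N\to AB$ with $\mathrm{lvl}(N)=k$. Each child is either a terminal, read from the persistent $x$ and $V$, or a nonterminal. By \cref{def:proper}, a nonterminal child lies in exactly $F_{k-1}$, whose values sit in $B_{(k-1)\bmod 2}$ by the induction hypothesis. So \cref{lem:eval} gives $\mathrm{eval}_x(N)$ using only reads of the level-$(k-1)$ buffer and of global inputs. Writes into $B_{k\bmod 2}$ go to distinct indices, and by \cref{lem:antichain} no node of $F_k$ reads a value of $F_k$. The level is therefore race-free, and during level $k$ one buffer is read-only and the other is write-only. Roots of level $k$ are copied to $y[r]$ for every row $r$ with $\mathrm{root}_r\in F_k$, which is correct by \cref{lem:rows}.

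The main obstacle is the freeing claim: that $F_{k-1}$ has no consumers after level $k$. Its possible consumers are parents at levels $\ge k$. Proper layering pins every parent of a nonterminal child to exactly one level above it. Hence every consumer of $F_{k-1}$ lies in $F_k$, which is now finished. Any root in $F_{k-1}$ was already emitted at step $k-1$. So $B_{(k-1)\bmod 2}$ can be overwritten by level $k+1$, and this is precisely where proper layering, and not just the antichain property, is needed. For the footprint, at any moment only $F_{k-1}$ and $F_k$ are live, so the transient storage is at most $w_{k-1}+w_k\le 2\max_k w_k=O(\max_k w_k)$. This does not grow with $|\mathcal{R}|=\sum_k w_k$; the persistent $x$, $V$, $y$ and the rule arrays are excluded, as the statement intends. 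Finally, the correctness of $y$ after level $L$ follows from the invariant, since every root lies in some $F_k$ with $k\le L$.
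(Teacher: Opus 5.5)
Your proof is correct and uses the same key step as the paper: by \cref{def:proper}, every consumer of a level-$(k-1)$ nonterminal lies at level $k$, so that frontier is dead once level $k$ completes, while terminal children are read from the persistent $x$ and $V$ rather than from a transient buffer. Your induction spells out what the paper's three-line proof leaves implicit: the value invariant on the buffers, race-freedom within a level, emission of roots before their buffer is recycled, and the footprint bound $w_{k-1}+w_k\le 2\max_k w_k$.
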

\begin{proof}
By \cref{def:proper}, all parent nodes consuming a nonterminal from level $k-1$ must reside exactly at level $k$. Once level $k$ is fully processed, all potential consumers of the level-$(k-1)$ buffer have finished execution, allowing it to be cleared. Terminal nodes are read from persistent global memory and do not depend on transient buffers.
\end{proof}

We use pass-through completion (\cref{sec:through}) to ensure proper layering, enabling streaming, double-buffered execution. Minimizing $L$ reduces kernel launches, while bounding the maximum level width $w_k$ ensures live sets fit within device memory. This approach mirrors sparse dynamic programming on small-width DAGs~\cite{makinen-sparse-dynprog}, where costs scale with the width of a minimum path cover: here, the per-level antichains of \cref{lem:antichain}.

\begin{figure}[t]
\centering
\includegraphics[width=.95\linewidth]{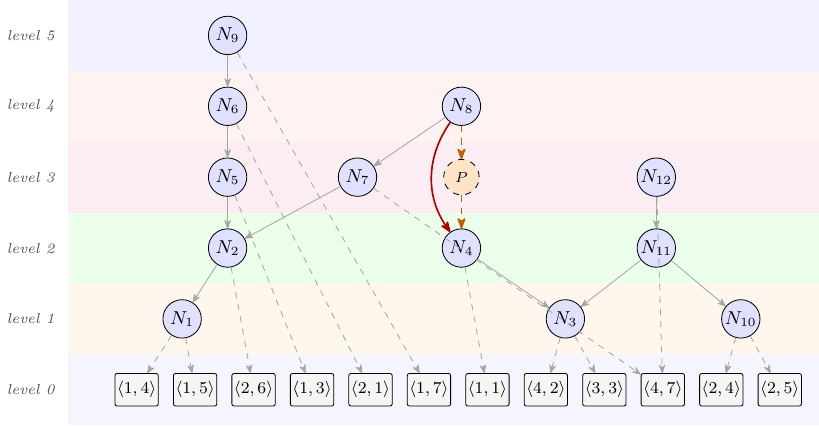}
\caption{The grammar DAG of \cref{eq:rules} ($L=5$ levels, background bands; \cref{eq:level}). Circles represent non-terminals, while boxes represent terminal leaves. Dashed gray edges are harmless terminal level-skips (\cref{def:proper}). The lone nonterminal level-skipping edge $N_8\!\to\!N_4$ (red) violates proper layering. \emph{Pass-through completion} (\cref{sec:through-layer}) splits it through the synthesized node $P$ (orange path $N_8\!\to\!P\!\to\!N_4$), making every nonterminal edge level-adjacent.}
\label{fig:dag}
\end{figure}

\section{A proper-layered streaming engine}
\label{sec:through}

Our central thesis is that we can fully preserve the compression advantages of the \texttt{mm-repair} format while enabling the streaming, double-buffered execution model of \cref{lem:liveness}. We achieve this by transforming the raw RePair output into a properly layered grammar during a post-processing phase. The overhead introduced consists of the synthesized pass-through nodes; yet, our experiments demonstrate that this cost remains low for highly compressible matrices and scales up only on dense, poorly compressible datasets (\cref{sec:through-exp}).

\subsection{Step 1: RePair (unchanged)}
We execute RePair on the input sequence $S$ exactly as described in~\cite{mmrepair}, processing the matrix row by row and strictly enforcing the \$ boundaries. We treat it here as an immutable black box. Its primary drawback for high-throughput parallelization is structural: its greedy, global selection of the most frequent digram yields deep graph topologies containing level-skipping edges (\cref{fig:dag}). We report the overall depth $L$ and the exact distribution of edge spans for our datasets in \cref{sec:through-exp}.

\subsection{Step 2: pass-through completion (layering)}
\label{sec:through-layer}
Because standard RePair offers no structural layering guarantees, we explicitly add nodes to the grammar graph. For every nonterminal directed edge $N\to\dots M\dots$ where the level differential $d=\mathrm{lvl}(N)-\mathrm{lvl}(M)$ is strictly greater than 1, we intercept the edge by inserting a sequential chain of $d-1$ \emph{pass-through} nodes denoted $M=M_0,M_1,\dots,M_{d-1}$. These artificial nodes implement identity rules of the form $M_i\to M_{i-1}$, ensuring that $\mathrm{eval}_x(M_i)=\mathrm{eval}_x(M_{i-1})$. The parent node $N$ is then modified to read from $M_{d-1}$ at level $\mathrm{lvl}(N)-1$; terminal edges are left unaltered (\cref{def:proper}). 

Completion runs independently per row, preserving the row roots in $\mathcal{C}$ and the alignment $y[r]=W[\text{root}_r]$; it is an offline, amortized preprocessing cost, not an online penalty (\cref{sec:through-exp}). \Cref{fig:dag} shows the result: the single level-skipping edge $N_8\!\to\!N_4$ (red) is split by one pass-through $P$ at level~3 ($P\to N_4$), so $N_8$ reads $P$ one level below and every nonterminal edge becomes level-adjacent (\cref{lem:liveness}). The pass-through is not overhead but bookkeeping: it \emph{is} $N_4$ held live across level~3 for its higher consumer $N_8$, so each band's width is exactly the live set. \Cref{fig:trace} replays this as a numeric, round-by-round computation.

Such insertions are not wasteful: a pass-through crossing level $k$ is a value \emph{alive but unconsumed} there, so the completed width $w_k$ tracks the live set and the buffers hold exactly the values in flight (on GPUs, forwarded frontier slots; on an FPGA pipeline they would correspond to pipeline registers). Worst-case completion adds $O(|\mathcal{R}|\,L)$ nodes, but the actual inflation follows the edge-span distribution, which we measure empirically (\cref{sec:through-exp}).

\paragraph{Depth still matters, but is not enough.}

Grammar height governs both the sequential launch count $L$ and the length of individual pass-through chains. Thus, pre-balancing to a logarithmic height bound of $O(\log n)$~\cite{ganardi,balancing-urbina} is an attractive complementary optimization. We do not apply it in this work (our pipeline consumes the raw RePair grammar unchanged), and we stress that it would not replace completion. Structural balancing bounds only the \emph{depth}, not the active \emph{live set}, meaning a balanced grammar can still host nonterminal edges that skip multiple levels. Because only explicit completion guarantees compliance with the streaming model, balancing is orthogonal to our contribution; we leave its integration to future work (\cref{sec:conclusion}).

\subsection{Step 3: the streaming, double-buffered sweep}
\label{sec:gpu}

During the offline phase, we evaluate \cref{eq:level} on the completed grammar. Nodes are laid out level by level, with rules concatenated into a flat array and partitioned by offsets so each level $k$ occupies a contiguous block. Node children are encoded as absolute offsets relative to the {\em preceding frontier array}, with tags identifying terminal, nonterminal, or pass-through types. Maintaining nodes in RePair index order ensures that child reads remain monotone and largely contiguous across warps. Execution of $y=Mx$ follows \cref{lst:kernel} using alternating buffers, $\mathit{prev}$ and $\mathit{cur}$. For each level $k=1,\dots,L$, a single kernel launch reads level $k-1$ and writes level $k$, with the host swapping buffer pointers to alternate modes (\cref{lem:liveness}). The maximum transient memory footprint is $O(\max_k w_k)$.

To prevent warp divergence, we handle all node variants using a branch-free, fused multiply--add sequence. We allocate a dedicated \emph{zero terminal} at $T[0]=0$ and encode all child references as signed array indices: a non-negative index $c$ indicates a lookup in the previous frontier buffer $\mathit{prev}[c]$, whereas a negative index $c$ routes the lookup to the persistent terminal array $T[-c-1]$, where $T[p]=V[\ell_p]\,x[j_p]$. Every node can then be evaluated using the unified expression $\mathit{cur}=\mathrm{coeff}\cdot\mathrm{cv}(\mathit{left})+\mathrm{cv}(\mathit{right})$, where the tuple parameters are assigned as follows: $(\mathrm{coeff},\mathit{right})=(1,B)$ for standard binary rules $N\to AB$; $(1,\text{zero})$ for pass-through operations $N\to A$; and $(t,\text{zero})$ for run-length rules $N\to A^{t}$, giving $\mathrm{eval}_x(N)=t\cdot\mathrm{eval}_x(B)$ from a single child read. The last case never arises for us: within a row, the column index is distinct at every position, so no pair (hence no digram) can repeat, and RePair emits only binary productions; the engine nonetheless supports it for run-length-enriched grammar constructions too (\cref{sec:conclusion}).

\begin{listing}[t]
\begin{lstlisting}[style=kernel]
kernel round(level k, prev, cur):  # one thread t per node of level k
 r = rules[k][t]
 cur[t] = r.coeff * cv(r.left) + cv(r.right) # one fused MAD, no branch
kernel emit(level k, cur):     # one thread per C-occurrence at level k
 atomicAdd(&y[row[e]], cur[pos[e]])      # emit on the spot, then free
# host: zero y; for k = 1..L { round(k,prev,cur); emit(k,cur); swap(prev,cur) }
\end{lstlisting}
\caption{Simplified streaming round: read $\mathit{prev}$ (level $k-1$), write $\mathit{cur}$ (level $k$); the host swaps the buffers after each launch. $\mathrm{cv}(c)=\mathit{prev}[c]$ if $c\ge 0$, else $T[-c-1]$. A node of $\mathcal{C}$ is emitted into $y$ the moment its level is computed (atomic add), then freed.}
\label{lst:kernel}
\end{listing}

\Cref{fig:sweep} illustrates the execution on three consecutive levels of the running example ($L=5$; \cref{fig:trace} runs it in full). The round kernel is a conflict-free gather: threads write distinct slots of $\mathit{cur}$, and concurrent reads of shared children in $\mathit{prev}$ are harmless. A root of $\mathcal{C}$ is emitted into $y$ by atomic addition the instant its level is finalized, then discarded; this keeps roots from propagating upward, so $w_k$ stays minimal rather than inflating to $\Theta(\text{rows})$. Pass-through completion keeps every lookup on the adjacent level rather than scattering across the DAG. The sequential launch count is bounded by $L$, which height-balancing would cap logarithmically. Per-round width tapers toward the roots; occupancy peaks where most of the reduction happens.

\begin{figure}[h]
\centering
\includegraphics[width=.85\linewidth]{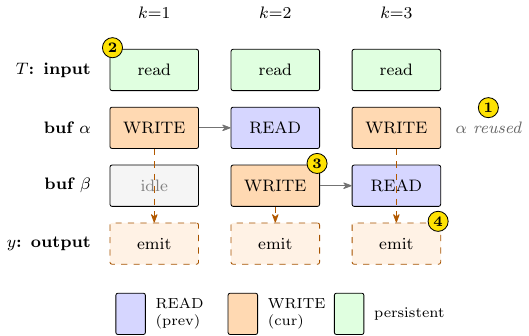}
\caption{The double-buffered, level-synchronous sweep (\cref{lst:kernel}), shown for the first three levels ($k=1,2,3$). Buffers $\alpha,\beta$ swap roles every level. \textcircled{1} double buffering; \textcircled{2} persistent terminals; \textcircled{3} branch-free fused MAD; \textcircled{4} emit-on-the-spot.}
\label{fig:sweep}
\end{figure}

\begin{figure}[t]
\centering
\includegraphics[width=.95\linewidth]{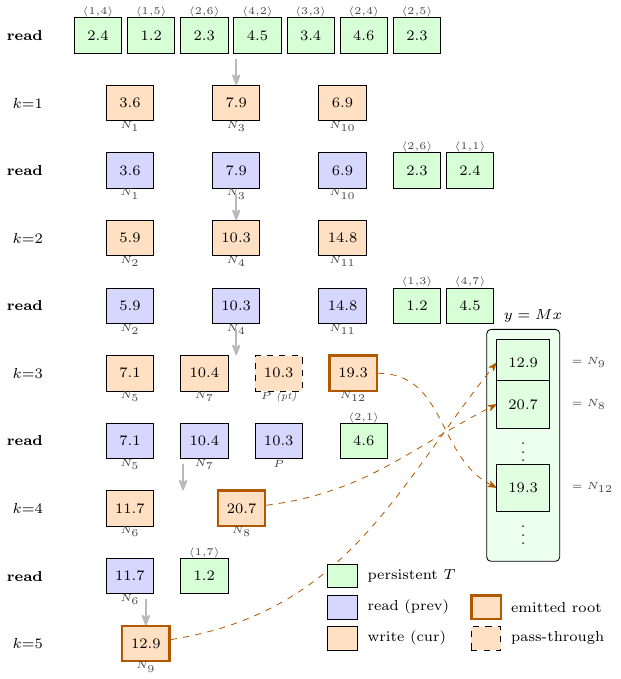}
\caption{The DAG grammar execution for input $x=[\,2,1,1,2,1,1,1\,]^{\top}$ and $V=[\,1.2,2.3,3.4,4.5\,]$. Terminals evaluate as $\langle\ell,j\rangle \mapsto V[\ell]\cdot x[j]$. Distinct row roots $N_{12},N_8,N_9$ evaluate to $19.3,20.7,12.9$. Ellipses ($\ldots$) elide repeated rows.}
\label{fig:trace}
\end{figure}

\subsection{Experimental results}
\label{sec:through-exp}

We evaluate the engine on large genotype datasets (GB10 platform, \cref{sec:setup}): computing a polygenic risk score is the right product $y=G\beta$ \cite{choi2020tutorial}. Biobank-scale panels (hundreds of thousands of samples, millions of variants) exceed physical memory, which is why genomics relies on specialized run-length/BWT-based haplotype formats~\cite{durbin2014pbwt,boucher-uk-biobank}. Our matrices $G$ come from human chromosomes 22, 21, and 20 (from the 1000 Genomes Project~\cite{1000genomes2015}): individuals as rows, SNPs as columns in genomic order, each cell in $\{0,1,2\}$; adjacent columns are in heavy linkage disequilibrium (LD), with repetitiveness that RePair exploits. We use two scales (a $10^5$-variant subset and the full per-chromosome sequence), plus synthetic panels from the coalescent with recombination via \texttt{msprime}~\cite{msprime2016,msprime2022}. The recombination-to-mutation ratio of these synthetic panels tunes LD block structure (reproducible via fixed seeds), with low ratios creating highly compressible, long-LD blocks and high ratios the inverse. We avoid full biobank cohorts (\cref{sec:limitations}); our largest synthetic panel (\texttt{crossover\_synth}, $10{,}000$ individuals, $700{,}000$ variants, $1.00$\,G non-zeros) serves as a scale benchmark.

\Cref{tab:geno_through} gives the grammars' structural parameters ($|\mathcal{R}|$, depth $L$, max frontier width $w^{*}$, pass-through inflation $+\text{pt}$). \Cref{tab:geno_time} gives the average time per right product (over 100 vectors) for our GPU streaming engine (\textbf{GPU}); a parallel \textbf{OpenMP} CPU sweep ($20$ threads) and a \textbf{sequential} CPU one; the original \texttt{mm-repair} (\texttt{mmr}) at 1 and 16 threads; and vendor \textbf{cuSPARSE} SpMV, with $\times$seq representing the GPU speed-up over sequential \texttt{mm-repair}. \Cref{tab:geno} adds device footprint and energy versus cuSPARSE.

\begin{table}[t]
\caption{Structural figures, measured on the GB10 node. $|\mathcal{R}|$: RePair non-terminals; $L$: grammar depth; $w^{*}$: maximum live width after completion (the streaming-buffer size); $+\text{pt}$: pass-through nodes added by completion. $\text{K}=10^{3}$, $\text{M}=10^{6}$.}
\label{tab:geno_through}
\centering\footnotesize
\setlength{\tabcolsep}{4pt}
\begin{tabular}{lcccc}
\toprule
matrix & $|\mathcal{R}|$ & $L$ & $w^{*}$ & $+\text{pt}$\\
\midrule
\multicolumn{5}{l}{\textbf{Real Genotypes (1000 Genomes)}} \\
Chr22 & 437.6K & 20 & 146.0K & 668.1K\\
Chr22 full & 4.29M & 24 & 1.41M & 7.02M\\
Chr21 & 404.6K & 24 & 133.1K & 568.6K\\
Chr21 full & 4.30M & 26 & 1.39M & 7.29M\\
Chr20 & 438.3K & 22 & 144.7K & 718.9K\\
Chr20 full & 6.65M & 24 & 2.18M & 11.12M\\
\multicolumn{5}{l}{\textbf{Synthetic Genotypes (Haplotypes)}} \\
\texttt{synth\_small} & 625.3K & 25 & 225.0K & 1.23M\\
\texttt{synth\_large} & 3.94M & 29 & 1.39M & 8.29M\\
\texttt{synth\_ld\_high} & 1.57M & 30 & 498.4K & 3.55M\\
\texttt{synth\_ld\_low} & 3.47M & 21 & 1.54M & 4.85M\\
\texttt{synth\_ind\_large} & 1.40M & 30 & 489.8K & 3.08M\\
\bottomrule
\end{tabular}

\end{table}

\paragraph{Depth and completion.}

The critical depth $L$ remains low across all test configurations: 20 to 30 tiers for the grammars of \cref{tab:geno_through}, and $L=32$ for the larger \texttt{crossover\_synth}. Crucially, the maximum streaming frontier width $w^{*}$ stays well below the base rule count $|\mathcal{R}|$ (about one third on the real chromosomes, up to one half on the synthetics), enabling the dual buffers to fit within device memory. Pass-through nodes are added in proportion to the underlying edge-span profiles (\cref{sec:through-layer}). As the genotype alphabet is discrete and small ($\{0,1,2\}$), terminal evaluation is cheap and does not bound performance, leaving traversal and emission atomics as the likely dominant kernel cost. Utilizing CUDA Unified Memory (\texttt{cudaMallocManaged}) prevents out-of-memory errors on the host during graph construction.

\paragraph{Time vs.\ \texttt{mm-repair} and cuSPARSE.}

Our GPU streaming engine outperforms the single-threaded CPU \texttt{mm-repair} reference across all test cases, achieving speedups ranging from $5.4\times$ to $15.5\times$. When compared against the heavily optimized, vendor cuSPARSE CSR kernel, our single-vector engine maintains competitive parity on the real chromosomes, from $1.41\times$ faster to $1.20\times$ slower than cuSPARSE. Further, our engine outperforms cuSPARSE on synthetic configurations characterized by high linkage disequilibrium or expanded sample sizes (achieving up to a $4.2\times$ performance improvement on \texttt{synth\_ind\_large}), where the underlying compression factor is at its maximum.

\begin{table*}[t]
\caption{Genotype average time per right matrix--vector product (ms/vector) on the GB10 node.}
\label{tab:geno_time}
\centering\footnotesize
\begin{tabular}{lrrrrrrr}
\toprule
matrix & GPU & OpenMP & seq. & mmr (seq) & mmr (16th) & cuSPARSE & $\times$seq\\
\midrule
\multicolumn{8}{l}{\textbf{Real Genotypes (1000 Genomes)}} \\
Chr22 ($2504\times0.10\text{M}$) & 0.50 & 11.01 & 9.37 & 2.80 & 3.00 & 0.64 & 5.6\\
Chr22 ($2504\times1.06\text{M}$) & 6.69 & 32.03 & 93.22 & 68.60 & 27.80 & 6.21 & 10.3\\
Chr21 ($2504\times0.10\text{M}$) & 0.50 & 14.07 & 10.08 & 2.70 & 0.60 & 0.71 & 5.4\\
Chr21 ($2504\times1.05\text{M}$) & 6.38 & 30.05 & 87.19 & 66.60 & 28.00 & 6.81 & 10.4\\
Chr20 ($2504\times0.10\text{M}$) & 0.48 & 18.48 & 8.92 & 3.00 & 2.70 & 0.68 & 6.2\\
Chr20 ($2504\times1.74\text{M}$) & 11.73 & 40.67 & 139.31 & 139.70 & 42.90 & 9.76 & 11.9\\
\multicolumn{8}{l}{\textbf{Synthetic Genotypes (Haplotypes)}} \\
$\text{synth\_small } (2\text{K}\times50\text{K})$ & 0.51 & 12.22 & 7.95 & 2.80 & 2.50 & 0.82 & 5.4\\
$\text{synth\_large } (5\text{K}\times200\text{K})$ & 2.97 & 31.33 & 50.57 & 46.00 & 26.00 & 7.41 & 15.5\\
$\text{synth\_ld\_high } (5\text{K}\times100\text{K})$ & 0.98 & 23.46 & 18.12 & 7.90 & 11.10 & 3.58 & 8.1\\
$\text{synth\_ld\_low } (5\text{K}\times100\text{K})$ & 3.78 & 23.66 & 66.53 & 47.30 & 21.10 & 3.78 & 12.5\\
$\text{synth\_ind\_large } (10\text{K}\times50\text{K})$ & 0.83 & 22.83 & 16.64 & 9.20 & 6.30 & 3.49 & 11.1\\
\bottomrule
\end{tabular}

\end{table*}

\paragraph{Space and energy vs.\ cuSPARSE.}

\Cref{tab:geno} reports analytic device footprint, time, and energy of the engine versus cuSPARSE CSR SpMV. The engine is consistently smaller and more energy-efficient, especially on highly repetitive or high-LD matrices. We observe a $4.0\times$ to $4.7\times$ footprint reduction across the real chromosomes, which correlates with lower energy usage (up to $35\%$ on the full chromosomes, and roughly half on the $10^5$-variant subsets). This efficiency stems from arithmetic operations running directly on the compressed grammar, without materializing uncompressed vectors. On synthetics, the LD level sets the scaling: high LD (recombination rate $10^{-9}$) gives a $7.3\times$ smaller space and beats cuSPARSE by $3.6\times$ in time and $4.6\times$ in energy. Conversely, low LD (recombination rate $10^{-7}$) drops to $3.5\times$ space and runs at time parity. Widening the cohort (\texttt{synth\_ind\_large}, $10{,}000$ individuals) reaches $7.9\times$ smaller space and $4.7\times$ less energy, at the speed-up already reported above. For biobank-scale problems, this memory advantage decides whether a matrix fits in high-speed GPU memory (\cref{fig:space}).

\begin{figure}[t]
\centering
\includegraphics[width=.95\linewidth]{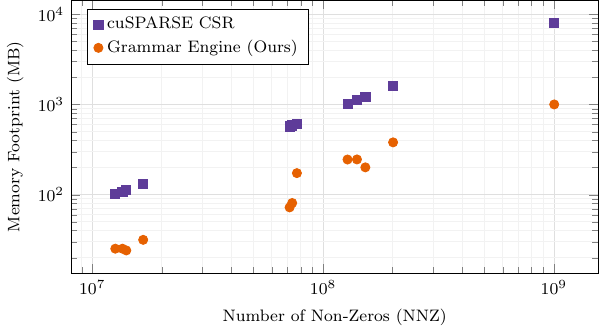}
\caption{Analytic device footprint, grammar engine vs.\ cuSPARSE CSR, on log-log scale; both series are analytic, as in \cref{tab:geno}. Even at the largest scale (\texttt{crossover\_synth}, $1.00$\,G non-zeros) cuSPARSE's CSR ($8.03$\,GB) still fits, but the Grammar Engine's $1.01$\,GB is $8.0\times$ smaller.}
\label{fig:space}
\end{figure}

\paragraph{Pushing to billion-nonzero scale.}

To probe the largest-scale regime, we generated a large synthetic matrix (\texttt{crossover\_synth}, $10{,}000 \times 700{,}000$ genotypes, $1.00$ billion non-zeros). In standard CSR representation, this matrix requires $8.03$\,GB of GPU memory; on the GB10's unified $119$\,GiB pool, cuSPARSE runs, but at $48.38$\,ms per vector and $1976$\,mJ/vec: the highest single-vector cost of any genotype matrix we test.

In contrast, our Grammar Engine processes the matrix entirely in its compressed representation, requiring an analytic device footprint of only $1.01$\,GB (measured peak: $0.98$\,GB), which is $8.0\times$ smaller than CSR. Once constructed, the engine evaluated the right product in $20.09$\,ms per vector ($14.5\times$ faster than the sequential CPU reference and $2.41\times$ faster than cuSPARSE), with an energy consumption of $628$\,mJ/vec ($3.15\times$ less; see crossover point in \cref{fig:space}). At this scale, the advantage is no longer confined to space but also to time and energy; at a true biobank scale (where the uncompressed matrix exceeds device memory), it is what keeps the matrix resident at all.

\begin{table*}[t]
\caption{Genotype right product. Single-vector engine vs.\ cuSPARSE CSR SpMV in unified memory on the GB10 node. $\text{nnz}$: non-zeros; $\text{M}=10^{6}$; $\text{K}=10^{3}$.}
\label{tab:geno}
\centering\footnotesize
\begin{tabular}{lr rr rr rr}
\toprule
& & \multicolumn{2}{c}{space (MB)} & \multicolumn{2}{c}{time (ms)} & \multicolumn{2}{c}{energy (mJ)}\\
\cmidrule(lr){3-4}\cmidrule(lr){5-6}\cmidrule(lr){7-8}
matrix ($\text{rows}\times\text{cols}$) & nnz & eng. & cuS. & eng. & cuS. & eng. & cuS.\\
\midrule
\multicolumn{8}{l}{\textbf{Real Genotypes (1000 Genomes)}} \\
Chr22 ($2504\times0.10\text{M}$) & 12.58M & 25.2 & 101.1 & 0.50 & 0.64 & 14 & 26\\
Chr22 ($2504\times1.06\text{M}$) & 127.58M & 246.1 & 1025.4 & 6.70 & 6.21 & 199 & 257\\
Chr21 ($2504\times0.10\text{M}$) & 14.04M & 24.1 & 112.8 & 0.50 & 0.71 & 14 & 31\\
Chr21 ($2504\times1.05\text{M}$) & 140.20M & 246.6 & 1126.4 & 6.40 & 6.81 & 183 & 280\\
Chr20 ($2504\times0.10\text{M}$) & 13.52M & 25.2 & 108.6 & 0.47 & 0.68 & 14 & 28\\
Chr20 ($2504\times1.74\text{M}$) & 201.04M & 381.6 & 1616.1 & 11.72 & 9.76 & 328 & 439\\
\multicolumn{8}{l}{\textbf{Synthetic Genotypes (Haplotypes)}} \\
$\text{synth\_small } (2\text{K}\times50\text{K})$ & 16.62M & 31.6 & 133.2 & 0.49 & 0.82 & 15 & 31\\
$\text{synth\_large } (5\text{K}\times200\text{K})$ & 152.44M & 201.4 & 1221.0 & 2.99 & 7.41 & 96 & 279\\
$\text{synth\_ld\_high } (5\text{K}\times100\text{K})$ & 73.47M & 81.0 & 588.5 & 0.99 & 3.58 & 32 & 146\\
$\text{synth\_ld\_low } (5\text{K}\times100\text{K})$ & 77.01M & 173.8 & 616.8 & 3.74 & 3.78 & 123 & 144\\
$\text{synth\_ind\_large } (10\text{K}\times50\text{K})$ & 71.60M & 72.4 & 573.4 & 0.85 & 3.49 & 29 & 135\\
$\text{crossover\_synth } (10\text{K}\times700\text{K})$ & 1002.97M & 1005.6 & 8030.5 & 20.09 & 48.38 & 628 & 1976\\
\bottomrule
\end{tabular}

\end{table*}

\paragraph{Batched evaluation (SpMM).}

Batching $B$ right-hand vectors ($Y=MX$, the kernel of block power/Krylov iteration~\cite[\S7.3, \S10.1, \S10.3.6]{golub-van-loan-matrix-book}) amortizes grammar traversal. With both sides given their best configuration (the engine's best $B$, and cuSPARSE's best of \texttt{ALG\_DEFAULT}/\texttt{CSR\_ALG2}/\texttt{CSR\_ALG3}), cuSPARSE leads by $1.7\times$ to $10.9\times$. This gap is narrowest on the full real chromosomes ($1.7$ to $2.4\times$) and widest on the least compressible synthetics ($7.2$ to $10.9\times$). Unlike the single-vector case, the footprint advantage does not persist at large $B$, where the engine's $B$-wide terminal array and streaming buffers overtake cuSPARSE's $B$-invariant CSR (\cref{sec:limitations}). Full per-dataset batched times and the algorithm-sweep that justifies \texttt{CSR\_ALG3} as the vendor baseline are in \cref{app:batched}.

\section{Beyond $(+,\times)$: a monoid-homomorphism engine}
\label{sec:semiring}

Our engine does not rely on full-ring axioms, so the same schedule that computes the genotype product also evaluates graph reachability, with no change to the construction pipeline in \cref{sec:through}. Correctness over the shared DAG needs only that the combine $\oplus$ be \emph{associative} (a monoid): then a reused nonterminal's value is invariant to how its sub-expansion is parenthesized, so the intra-level parallelism of \cref{lem:antichain} and the liveness of \cref{lem:liveness} still hold. Commutativity is not required, since RePair preserves symbol order and never pairs across row boundaries. The leaf map (non-zero terminals) is the multiplicative $\otimes$; an unmapped entry takes the $\oplus$-identity (the semiring's zero: $0$ for $(+,\times)$ and Boolean, $+\infty$ for tropical).

Our engine evaluates \emph{any} arbitrary monoid homomorphism over the grammar without schedule adjustments. Users swap low-level device primitives via a unified configuration struct (\texttt{SEMIRING=\{plustimes,boolean,tropical\}}). Two semirings are of immediate interest: the \emph{Boolean} semiring ($\oplus={}$\textsc{or}, $\otimes={}$\textsc{and}), where one matrix--vector pass expands a single BFS frontier; and the \emph{Tropical} semiring ($\oplus=\min$, $\otimes={}+$), where a pass performs one Bellman--Ford edge relaxation. We measure a single pass: the natural unit of work; iterating to the BFS/SSSP fixpoint requires a square, endomorphic node remap, which we leave to future work (\cref{sec:limitations}). This recasts adjacency compression as an enabler for matrix-algebraic graph queries, such as WebGraph~\cite{webgraph,webgraph_rust}, Zuckerli's~\cite{zuckerli} intervalisation for adjacency--vector products~\cite{francisco2022friendly,greener}, $k^2$-tree representations~\cite{k2tree-spire}\cite[\S9.2.1]{navarro_book}, and Boolean-matrix algebras for regular path queries over compressed structures~\cite{rpq-k2tree-vldbj,rpq-compact-vldbj,time-space-rpq,arroyuelo2026boosting}.

This generalization reinforces our primary thesis: the same memory optimization that justifies grammar compression for genotypes (storing and operating on a highly compressed representation when uncompressed data structures exceed device limits) applies to any highly-repetitive graph topology. To demonstrate this capability, we evaluate the Boolean and Tropical semirings on five adjacency relation matrices built from Wikidata (the same Zenodo collection~\cite{time-space-rpq}), chosen to span scales, edge densities, and topologies: \texttt{wd\_sports\_team} (\texttt{P54}, member of sports team, $332{,}121 \times 29{,}854$), \texttt{wd\_cast\_member} (\texttt{P161}, cast member, $173{,}977 \times 144{,}095$), \texttt{wd\_citizenship} (\texttt{P27}, country of citizenship, $2{,}874{,}250 \times 2{,}556$), \texttt{wd\_occupation} (\texttt{P106}, occupation, $3{,}459{,}933 \times 10{,}610$), and \texttt{wd\_subclass\_of} (\texttt{P279}, subclass of, $1{,}487{,}709 \times 73{,}417$). 

The first three are bipartite, entity-to-entity relations with varying widths and degrees of repetition. The last two add the largest edge set and an ontology-hierarchy (rather than bipartite) topology. Because Wikidata subjects \cite{wikidata-survey,wikidata-cacm,wikidata-making} number in the millions, these matrices are far too large to materialize densely; we thus build them directly from the sparse per-source edge list into the CSRV/RePair grammar without ever forming the uncompressed matrix. Dataset characteristics and grammar structure (averaged over 50 iterations, CPU references verified bit-for-bit) are centralized in \cref{tab:graph_struct}, and runtime/space results in \cref{tab:graph}. We then push the space argument to the graph scale (\cref{tab:graph_scale}) for the two largest Wikidata relations and for a billion-edge Software Heritage graph \cite{swh_dag}.

Compressibility tracks per-source repetition, as with the genotype matrices of \cref{sec:through-exp}. RePair compresses \texttt{wd\_citizenship}'s $3.06$M edges to just $3{,}295$ rules and the ontology hierarchy \texttt{wd\_subclass\_of}'s $2.02$M edges to $6{,}825$ rules. Both are relations in which each source shares its target structure with many others. \texttt{wd\_sports\_team} and \texttt{wd\_occupation} compress moderately well ($70{,}158$ and $41{,}544$ rules from $1.14$M and $4.60$M edges respectively), while \texttt{wd\_cast\_member} (where each film has a largely distinct cast) compresses least ($51{,}677$ rules from $1.03$M edges, the only relation whose device footprint fails to beat CSR). Grammar depth ranges $L=5$ to $8$ and pass-through completion overhead ranges $11\%$ to $21\%$ across all five relations (\cref{tab:graph_struct}). None collapses to a flat, near-uncompressed structure that a degenerate, extremely narrow relation can produce, since all five expose genuine per-source repetition at this scale. Operating directly on these compressed DAG topologies, the engine's analytic device footprint undercuts cuSPARSE CSR on four of the five relations ($21\%$ to $33\%$ smaller on \texttt{wd\_sports\_team}, \texttt{wd\_citizenship}, \texttt{wd\_occupation}, and \texttt{wd\_subclass\_of}). Yet, it is \emph{larger} on the poorly repetitive \texttt{wd\_cast\_member} ($11.12$\,MB vs.\ $10.24$\,MB). This matches the failure mode seen on badly-compressing genotype instances (\cref{sec:through-exp}): once a relation lacks high repetitiveness, the per-rule bookkeeping overhead of the layered grammar outweighs the modest structural savings. Single-vector times remain within a small factor of cuSPARSE (Boolean $0.09$ to $0.73$\,ms vs.\ $0.06$ to $0.50$\,ms) while outrunning the sequential CPU reference by $17\times$ to $34\times$. The batched $B{=16}$ sweeps amortize the traversal further (e.g., Boolean \texttt{wd\_sports\_team} drops from $0.092$ to $0.051$\,ms/vector; \cref{tab:graph}). These results confirm that the space advantage is real but \emph{conditional on genuine per-source repetitiveness}: a measurable structural fact about each relation.

\paragraph{Semiring-native baseline (GraphBLAS).}

For these semirings, we compare to SuiteSparse:GraphBLAS~\cite{graphblas} running the same operation (\texttt{lor\_land}/\texttt{min\_plus} \texttt{mxv}) on the \emph{uncompressed} matrix (20 threads, \cref{tab:graph}). Working on the \emph{compressed} grammar, our GPU engine outperforms GraphBLAS on \emph{all five} Boolean sweeps. For the Tropical semiring, it is $3\times$ to $7\times$ faster across all five relations (e.g., \texttt{wd\_occupation}: $0.73$ vs.\ $3.57$ ms). A GPU-native cuGraph BFS/SSSP reference is discussed in \cref{sec:limitations}; it performs a \emph{full} traversal rather than a single mat-vec, serving as an end-to-end comparison rather than a per-mat-vec baseline.

\paragraph{Scaling to 10M--166M edges.}

To probe the space argument at scale, we build the two largest single Wikidata relations, bracketing the compressibility spectrum: the narrow categorical \texttt{wd\_country} ($10.1$M edges) and the citation network \texttt{wd\_cites\_work} ($166.7$M edges); see \cref{tab:graph_scale}. Their dense forms ($2.2$\,GB, $10.8$\,TB) are unusable, making the sparse-to-grammar construction (\cref{sec:semiring}) essential to make them addressable. 

Two effects stand out. Compressibility tracks per-source repetition: \texttt{wd\_country} collapses to $2{,}002$ rules (serializing $\mathbf{16.8\times}$ below CSR's footprint), whereas \texttt{wd\_cites\_work} barely compresses ($3.1\times$) due to its largely unique citation lists. More subtly, the \emph{device-resident} advantage is far smaller than the serialized one. The sweep still materializes one root per row plus the persistent terminal array, a $\Theta(\text{rows})$ cost. Consequently, even the $16.8\times$-compressible \texttt{wd\_country} is only $\sim\!25\%$ smaller than CSR in device bytes, and \texttt{wd\_cites\_work} is essentially tied. At graph scale, the grammar's decisive advantage lies in its \emph{serialized/transfer} size and in avoiding the dense matrix construction, rather than device-resident bytes, which are lower-bounded by the per-row root/terminal structure. This marks an honest boundary complementary to the genotype scale regime (\cref{sec:through-exp}), where the engine's device footprint stays several-fold below CSR even as both fit.

\paragraph{A billion-edge software graph (Software Heritage).}

Our largest instance, and the only one from the \emph{software} domain, is drawn from the Software Heritage (SWH) archive~\cite{swh_dag}, the global Merkle-DAG of publicly archived source code \cite[\S2.1]{swh-ecosystems-book}\cite{swh_cacm,di-cosmo-archiving}. It is repetitive by construction. Source code evolves by small edits over many revisions and forks that reuse the same files and directory subtrees across the archive: exactly the redundancy RePair exploits. From the \texttt{2021-03-23-popular-3k-python} export, we build a single boolean adjacency of $45.7$M nodes and $1.22$ billion edges (\cref{tab:graph_scale}). Its dense form ($\approx\!261$\,TB) cannot be materialized, and even a boolean CSR needs $\approx\!10$\,GB. RePair compresses it to a $30.3$M-rule grammar serialized to $490$\,MB, representing a $21\times$ reduction under CSR. 

Crucially, and unlike \texttt{wd\_cites\_work}, the advantage here \emph{also} survives device-resident ($2.79$\,GB, $3.7\times$ under CSR), because SWH is simultaneously highly repetitive \emph{and} dense enough ($\approx\!27$ edges/row) to amortize the $\Theta(\text{rows})$ root/terminal cost. Structurally, its grammar is deep and wide ($L=70$, $w^{*}=8.98$M), presenting a distinct stress test compared to the shallow bipartite relations ($L=5$ to $8$). Correctness is verified bit-for-bit against the CPU reference on the \emph{full} $45.7$M-node, $1.22$\,billion-edge graph (Boolean semiring; the GPU output equals both the sequential and OpenMP CPU sweeps, with maximum absolute difference $0$ over all $45.7$M outputs). Our grammar serializes to $\approx\!3.2$ bits/edge; WebGraph~\cite{webgraph,webgraph_rust} stores it slightly smaller ($2.485$ bits/edge) but does not evaluate a semiring mat-vec on the compressed form on a GPU.

\begin{table*}[t]
\caption{Largest graph relations that bracket the compressibility spectrum: two Wikidata relations and the billion-edge Software Heritage software graph. REANS: serialized grammar (the engine's on-disk/transfer size); CSR / eng.\ dev.: analytic device footprint of cuSPARSE CSR and the engine (for these Boolean relations CSR's serialized size and device footprint coincide up to the negligible $x/y$ vectors, so the CSR column doubles as both baselines); Bool: single-vector Boolean time (engine / cuSPARSE), GB10 node. Dense forms ($2.2$\,GB / $10.8$\,TB / $261$\,TB) are unusable.}
\label{tab:graph_scale}
\centering\footnotesize
\begin{tabular}{lrrrrrrr}
\toprule
relation & nnz & $|\mathcal{R}|$ & $L$ & REANS & CSR & eng.\ dev. & Bool eng/cuS\\
 & & & & (MB) & (MB) & (MB) & (ms)\\
\midrule
\texttt{wd\_country} ($10.1$M) & 10,089,284 & 2,002 & 5 & 9.6 & 161.2 & 120.8 & 1.63 / 1.41\\
\texttt{wd\_cites\_work} ($166.7$M) & 166,682,725 & 9,919,927 & 15 & 459.0 & 1439.7 & 1478.4 & 42.71 / 34.56\\
\texttt{swh} ($1.22$G) & 1,218,488,928 & 30,292,590 & 70 & 490.1 & 10301.0 & 2792.9 & 52.79 / 65.42\\
\bottomrule
\end{tabular}

\end{table*}

\begin{table*}[t]
\caption{Graph right product under Boolean and Tropical semirings across five Wikidata relations on the GB10 node (single-vector time, slash-separated per method: Boolean CSR/eng/GB, Tropical eng/GB; batched engine-only at $B{=}16$; footprint columns analytic CSR/eng). CSR: uncompressed cuSPARSE (Tropical: no vendor kernel, N/A). GB: SuiteSparse:GraphBLAS, 20 threads (\texttt{lor\_land}/\texttt{min\_plus} \texttt{mxv}). Structure in \cref{tab:graph_struct}.}
\label{tab:graph}
\centering\footnotesize
\begin{tabular}{lrrr rr rr}
\toprule
& & & analytic MB & \multicolumn{2}{c}{Boolean} & \multicolumn{2}{c}{Tropical}\\
\cmidrule(lr){5-6}\cmidrule(lr){7-8}
relation & nnz & $|R|$ & (CSR/eng) & single (CSR/eng/GB) & $B{=}16$ (eng) & single (eng/GB) & $B{=}16$ (eng) \\
\midrule
\texttt{wd\_sports\_team} & 1,136,249 & 70,158 & 11.87/9.33 & 0.064/0.092/0.267 & 0.051 & 0.092/0.507 & 0.051 \\
\texttt{wd\_cast\_member} & 1,033,124 & 51,677 & 10.24/11.12 & 0.056/0.116/0.144 & 0.046 & 0.125/0.383 & 0.059 \\
\texttt{wd\_citizenship} & 3,063,058 & 3,295 & 47.52/34.63 & 0.374/0.501/0.684 & 0.310 & 0.438/3.021 & 0.280 \\
\texttt{wd\_occupation} & 4,596,658 & 41,544 & 64.51/43.35 & 0.501/0.729/0.818 & 0.390 & 0.735/3.570 & 0.395 \\
\texttt{wd\_subclass\_of} & 2,024,347 & 6,825 & 28.40/19.65 & 0.173/0.222/0.550 & 0.156 & 0.247/1.778 & 0.179 \\
\bottomrule
\end{tabular}

\end{table*}

\section{Experimental setup}
\label{sec:setup}

All metrics, including host-side grammar decomposition, structural analysis, and GPU execution throughput profiling, are collected on a device exclusively reserved for our experiments and equipped with an NVIDIA GB10 Grace–Blackwell superchip architecture. This hardware platform runs Ubuntu 24.04 LTS (Linux kernel 6.17) and features a 20-core Arm CPU (aarch64 architecture) tightly coupled to 119 GiB of unified coherent memory shared over a high-bandwidth NVLink-C2C interconnect. The onboard GPU hosts 48 streaming multiprocessors (SMs) targeting compute capability 12.1 ({\tt sm\_121}). The software stack comprises the CUDA 13.0 toolkit, g++ 13.3, and CMake 3.28. Energy is measured from the on-die power telemetry via NVML: we sample {\tt nvmlDeviceGetTotalEnergyConsumption} on the Blackwell device before and after a sustained evaluation loop (repeated until at least 1.5 s of sustained execution) and divide the energy delta by the number of products, reporting GPU energy in mJ per vector; these figures are subject to the profiling variance noted in \cref{sec:limitations}.

The most consequential architectural trait of the GB10 platform for our implementation is its fully unified and coherent physical memory layout: the host CPU and destination GPU share the 119 GiB pool over the NVLink-C2C bus. This eliminates explicit host-to-device memory copies ({\tt cudaMemcpy}) for both the base grammar and the active streaming frontiers (\cref{lem:liveness}), validating the use of compressed formats to save memory footprint and bandwidth. We allocate the frontier arrays via {\tt cudaMallocManaged}.

\paragraph{Correctness.} Every algorithm and configuration we report is checked on a shared input vector against independent implementations, which must agree: the host CPU reference sweep (sequential and 20-thread OpenMP), the \texttt{mm-repair} CPU grammar mat-vec, the materialized-CSR path (cuSPARSE for $(+,\times)$ and Boolean, plus a CPU SpMV / min-plus over the \emph{same} grammar-reconstructed matrix), the batched (SpMM) engine and cuSPARSE kernels, and, for the graph semirings, SuiteSparse:GraphBLAS ($\lor\!\land$ and $\min,+$). Across all genotype and Wikidata relations, these agree within float precision for $(+,\times)$ and bit-for-bit for Boolean and Tropical; the full Software Heritage graph ($45.7$M nodes, $1.22$\,billion edges) is verified bit-for-bit against both the CPU reference sweep and the materialized CSR.

\section{Limitations}
\label{sec:limitations}

We delineate the main boundaries of the current evaluation.

\begin{enumerate}
    \item \emph{Evaluation scale.} Full biobank-scale matrices (hundreds of thousands of individuals) are our target but are modeled here by synthetics up to $10\text{K}$ samples. Whether pass-through inflation and buffer sizing remain stable at full scale without hurting occupancy remains to be verified. Mitigating this, our synthetic panels are not ad-hoc: they are simulated with \texttt{msprime}~\cite{msprime2016,msprime2022}, the community-standard coalescent-with-recombination simulator, whose panels reproduce the haplotype-block and LD structure of real cohorts. What our synthetics leave untested is thus the scale of that structure, not its realism.
    \item \emph{Host-side construction cost.} The compressed format carries a one-time offline cost (\cref{tab:build}: the RePair build, plus milliseconds of completion and level-bucketing), amortized after a few hundred products and thus negligible in the many-vector regimes we target.
    \item \emph{Batched throughput.} cuSPARSE's tuned SpMM outperforms our engine across all configurations ($1.7\times$ to $10.9\times$). The single-vector memory advantage also erodes at large batch: the engine holds its terminal array (one slot per each of the $\alpha$ distinct leaf $\langle\ell,j\rangle$) and both streaming buffers $B$-wide, a cost proportional to $(\alpha+2w^{*})B$ absent from cuSPARSE, whose CSR is $B$-invariant and grows only the dense $X,Y$. On \texttt{Chr22 full} the footprints cross near $B\approx50$, past the engine's throughput-optimal $B{=}64$, so there cuSPARSE is smaller; evaluating the leaves in the sweep rather than materializing them $B$-wide would drop the $\Theta(\alpha B)$ term, which we leave to future work. What survives is structural: we never materialize the uncompressed matrix.
    \item \emph{Graph analysis.} The space advantage over CSR is conditional on the relation RePair-compressing (\cref{tab:graph_struct,tab:graph}) and reverses on poorly repetitive ones (\texttt{wd\_cast\_member}); our GraphBLAS comparison is a single \texttt{mxv}, so an end-to-end evaluation (fixpoint-iterated reachability/SSSP and batched multi-source, against the cuGraph BFS/SSSP reference) is left to future work.
    \item \emph{Profiling variance.} Because the GB10 shares unified memory with the host (\cref{sec:setup}), timings and energy vary with OS scheduling; the structural metrics ($L$, $w^{*}$, pass-through counts; \cref{tab:geno_through,tab:graph_struct}) are instead invariant, architecture-independent grammar traits.
\end{enumerate}

\section{Conclusion and future work}
\label{sec:conclusion}

We showed that a right SpMV product over a grammar-compressed matrix is a conflict-free, level-synchronous DAG gather, and that the efficient parallel model is \emph{properly layered}: the sweep streams through two alternating buffers, freeing frontiers immediately (\cref{lem:liveness}). We maintain RePair's compression and ensure proper layering via post-processing pass-through completion. On large genotype matrices against cuSPARSE CSR, our engine’s primary advantage is \emph{space}. On real chromosomes, it occupies $4.0\times$ to $4.7\times$ less device memory at single-vector parity ($1.41\times$ faster to $1.20\times$ slower) and lower energy, scaling to $7.9\times$ smaller and $4.2\times$ faster on highly compressible synthetics, while matching large-scale polygenic scoring ($y=G\beta$). Requiring only an associative combine, our engine extends this space advantage to repetitive graphs with up to a billion edges. Extensions include: scaling to full biobank cohorts where CSR exceeds GPU memory; broadening the semiring framework (\cref{sec:semiring}) via GraphBLAS/cuGraph benchmarks; left multiplication ($x^{\mathsf{T}}=y^{\mathsf{T}}M$) via pull-based reduction; integrating height-balancing~\cite{ganardi,balancing-urbina} to shorten $L$; exploring recompression-based constructions~\cite{recompression}; and targeting FPGA pipelines.

\appendix

\section{Batched evaluation (SpMM)}
\label{app:batched}

\Cref{tab:geno_spmm} reports the engine's optimal batch configurations (best batch $B$, averaged over $40$--$50$ iterations, verified against the single-vector reference) against the vendor \texttt{cusparseSpMM} kernel under its own most competitive parameters. We sweep \texttt{ALG\_DEFAULT}, \texttt{CSR\_ALG2}, and \texttt{CSR\_ALG3} (abbreviated \texttt{a2}/\texttt{a3}) because the first two degrade severely as $B$ scales up on short, ultra-wide matrix shapes, whereas \texttt{CSR\_ALG3} stays stable; \cref{fig:batched} plots this degradation on Chr22 full, justifying \texttt{CSR\_ALG3} as our vendor baseline. With both environments optimized, cuSPARSE leads by $1.7\times$ to $10.9\times$ ($\times$cuS in the table), the gap narrowing on the full real chromosomes and widening on the least compressible synthetics.

\begin{figure}[t]
\centering
\includegraphics[width=.95\linewidth]{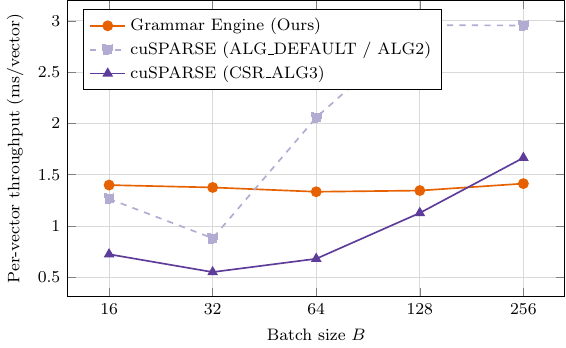}
\caption{Batched right product throughput (ms/vector) vs. batch size $B$ on Chr22 full.}
\label{fig:batched}
\end{figure}

\begin{table*}[t]
\caption{Batched right product (SpMM, $Y=MX$): best per-vector time (ms/vector) on the GB10 node.}
\label{tab:geno_spmm}
\centering\small
\begin{tabular}{lrrr}
\toprule
matrix & engine ($B$) & cuSPARSE (alg,$B$) & $\times$cuS\\
\midrule
\multicolumn{4}{l}{\textbf{Real Genotypes (1000 Genomes)}} \\
Chr22 ($2504\times0.10\text{M}$) & 0.101 (32) & 0.024 (a3,256) & 4.2\\
Chr22 ($2504\times1.06\text{M}$) & 1.33 (64) & 0.550 (a3,32) & 2.4\\
Chr21 ($2504\times0.10\text{M}$) & 0.096 (16) & 0.020 (a3,256) & 4.9\\
Chr21 ($2504\times1.05\text{M}$) & 1.30 (128) & 0.579 (a3,32) & 2.2\\
Chr20 ($2504\times0.10\text{M}$) & 0.104 (16) & 0.025 (a3,256) & 4.2\\
Chr20 ($2504\times1.74\text{M}$) & 2.13 (256) & 1.25 (a3,32) & 1.7\\
\multicolumn{4}{l}{\textbf{Synthetic Genotypes (Haplotypes)}} \\
$\text{synth\_small } (2\text{K}\times50\text{K})$ & 0.141 (16) & 0.019 (a3,256) & 7.2\\
$\text{synth\_large } (5\text{K}\times200\text{K})$ & 1.04 (32) & 0.242 (a3,64) & 4.3\\
$\text{synth\_ld\_high } (5\text{K}\times100\text{K})$ & 0.380 (32) & 0.121 (a3,128) & 3.1\\
$\text{synth\_ld\_low } (5\text{K}\times100\text{K})$ & 0.945 (256) & 0.087 (a3,128) & 10.9\\
$\text{synth\_ind\_large } (10\text{K}\times50\text{K})$ & 0.334 (16) & 0.079 (a3,256) & 4.2\\
\bottomrule
\end{tabular}

\end{table*}

\section{Additional tables}
\label{app:tables}

\Cref{tab:build} gives the two-part host construction cost per genotype matrix, and \cref{tab:graph_struct} the structural figures of the five Wikidata relations.

\begin{table*}[t]
\caption{Two-part host construction cost per genotype matrix on the GB10 node. \emph{grammar RePair}: one-time offline \texttt{mm-repair} build (shared with the CPU baseline); \emph{$+$pt build}: the engine's marginal completion step (pass-through, level-bucketing, terminal compaction). $\approx$\,mat-vecs: $+$pt build over one right-product time (\cref{tab:geno}), i.e.\ the products after which it amortizes.}
\label{tab:build}
\centering\small
\begin{tabular}{lrrr}
\toprule
matrix & grammar RePair (s) & $+\text{pt}$ build (ms) & $\approx$ mat-vecs\\
\midrule
\multicolumn{4}{l}{\textbf{Real Genotypes (1000 Genomes)}} \\
Chr22 ($2504\times0.10\text{M}$) & 13.02 & 82.89 & 167\\
Chr22 ($2504\times1.06\text{M}$) & 161.26 & 945.34 & 141\\
Chr21 ($2504\times0.10\text{M}$) & 13.28 & 88.01 & 176\\
Chr21 ($2504\times1.05\text{M}$) & 157.19 & 895.84 & 140\\
Chr20 ($2504\times0.10\text{M}$) & 12.69 & 80.16 & 169\\
Chr20 ($2504\times1.74\text{M}$) & 244.34 & 1435.69 & 122\\
\multicolumn{4}{l}{\textbf{Synthetic Genotypes (Haplotypes)}} \\
$\text{synth\_small } (2\text{K}\times50\text{K})$ & 9.72 & 111.85 & 228\\
$\text{synth\_large } (5\text{K}\times200\text{K})$ & 94.36 & 820.39 & 274\\
$\text{synth\_ld\_high } (5\text{K}\times100\text{K})$ & 46.50 & 325.59 & 328\\
$\text{synth\_ld\_low } (5\text{K}\times100\text{K})$ & 56.04 & 668.65 & 179\\
$\text{synth\_ind\_large } (10\text{K}\times50\text{K})$ & 43.90 & 284.71 & 337\\
\bottomrule
\end{tabular}

\end{table*}

\begin{table*}[t]
\caption{Structural figures for the five Wikidata relation matrices, measured on the GB10 node (columns as in \cref{tab:geno_through}; total: layered rules after pass-through completion; dimensions in \cref{sec:semiring}).}
\label{tab:graph_struct}
\centering\footnotesize
\setlength{\tabcolsep}{4pt}
\begin{tabular}{lrrrrrr}
\toprule
relation & nnz & $|\mathcal{R}|$ & total & $L$ & $w^{*}$ & $+\text{pt}$\\
\midrule
\texttt{wd\_sports\_team} & 1.14M & 70.2K & 80.9K & 6 & 49.3K & 10.7K\\
\texttt{wd\_cast\_member} & 1.03M & 51.7K & 62.3K & 8 & 42.4K & 10.7K\\
\texttt{wd\_citizenship} & 3.06M & 3.3K & 3.7K & 5 & 2.4K & 432\\
\texttt{wd\_occupation} & 4.60M & 41.5K & 46.7K & 8 & 19.5K & 5.1K\\
\texttt{wd\_subclass\_of} & 2.02M & 6.8K & 7.6K & 5 & 5.8K & 776\\
\bottomrule
\end{tabular}

\end{table*}

\bibliographystyle{plain}
\bibliography{references}

\end{document}